\newcommand{\com}[1]{\textcolor{red}{#1}}
\newcommand{\comm}[1]{\textcolor{blue}{#1}}
\newtheorem{Lemma}{Lemma}
\newcommand{\qn}{{\bf n}}
\newcommand{\qw}{{\bf w}}
\newcommand{\qI}{{\bf I}}
\newcommand{\qN}{{\bf N}}
\newcommand{\qY}{{\bf Y}}
\newcommand{\KE}{\mathcal{K}}
\newcommand{\LA}{\mathcal{L}}
\newcommand{\MRT}{\mathrm{MRT}}
\newcommand{\EEkm}{{\(E^E_{k} \,\)}} 
\newcommand{\gkl}{{\textbf{g}_{kl}}} 
\newcommand{\gklt}{{\textbf{g}^T_{kl}}} 
\newcommand{\gklc}{{\textbf{g}^*_{kl}}} 
\newcommand{\gklcp}{{\textbf{g}^*_{kl'}}} 
\newcommand{\gamkl}{{{\gamma}_{kl}}}
\newcommand{\gamil}{{{\gamma}_{il}}}
\newcommand{\gamklp}{{{\gamma}_{kl'}}}
\newcommand{\gamklsq}{{{\gamma}_{kl}^2}}
\newcommand{\aik}{{\alpha_{ik}}}
\newcommand{\aiksq}{{\alpha_{ik}^2}}
\newcommand{\aiktt}{{\alpha_{ik}^4}}
\newcommand{\oikl}{{{\varrho}_{ik,l}}} 
\newcommand{\oiklp}{{{\varrho}_{ik,l'}}}
\newcommand{\oiklsq}{{{\varrho}^2_{ik,l}}} 
\newcommand{\oiklsqp}{{{\varrho}^2_{ik,l'}}}
\newcommand{\kkl}{{K_{kl}}}
\newcommand{\kmrt}{{\kappa^{\MRT}_{kl}}}
\newcommand{\kmrti}{{\kappa^{\MRT}_{il}}}
\newcommand{\kmrtip}{{\kappa^{\MRT}_{il'}}}
\newcommand{\ckl}{{c_{kl}}}
\newcommand{\bkil}{{\zeta_{ik,l}}}
\newcommand{\bkilsq}{{\zeta^2_{ik,l}}}
\newcommand{\bkilp}{{\zeta_{ik,l'}}}
\newcommand{\bkilpsq}{{\zeta^2_{ik,l'}}}
\newcommand{\bklqp}{{\varsigma_{kl'}}}
\newcommand{\bilqp}{{\varsigma_{il'}}}
\newcommand{\bklq}{{\varsigma_{kl}}}
\newcommand{\bilq}{{\varsigma_{il}}}
\newcommand{\bklqsq}{{\varsigma^2_{kl}}}
\newcommand{\bilqsq}{{\varsigma^2_{il}}}
\newcommand{\gil}{{\textbf{g}_{il}}} 
\newcommand{\tauP}{{\tau_p\mathcal{P}_p}}
\newcommand{\hgkl}{{\hat{\textbf{g}}_{kl}}}
\newcommand{\thgkl}{{\Tilde{\hat{\textbf{g}}}_{kl}}}
\newcommand{\thgklc}{{\Tilde{\hat{\textbf{g}}}^*_{kl}}}
\newcommand{\thgklt}{{\Tilde{\hat{\textbf{g}}}^T_{kl}}}
\newcommand{\tekl}{{\Tilde{\hat{\pmb{\varepsilon}}}_{kl}}}
\newcommand{\teklt}{{\Tilde{\hat{\pmb{\varepsilon}}}^T_{kl}}}
\newcommand{\teklc}{{\Tilde{\hat{\pmb{\varepsilon}}}^*_{kl}}}
\newcommand{\thgil}{{\Tilde{\hat{\textbf{g}}}_{il}}}
\newcommand{\thgilc}{{\Tilde{\hat{\textbf{g}}}^*_{il}}}
\newcommand{\thgilt}{{\Tilde{\hat{\textbf{g}}}^T_{il}}}
\newcommand{\hgil}{{\hat{\textbf{g}}_{il}}}
\newcommand{\hgilt}{{\hat{\textbf{g}}^T_{il}}}
\newcommand{\hgilc}{{\hat{\textbf{g}}^*_{il}}}
\newcommand{\hgiltp}{{\hat{\textbf{g}}^T_{il'}}}
\newcommand{\bgkl}{{\Bar{\textbf{g}}_{kl}}} 
\newcommand{\bgklc}{{\Bar{\textbf{g}}^*_{kl}}} 
\newcommand{\bgklt}{{\Bar{\textbf{g}}^T_{kl}}} 
\newcommand{\bgilc}{{\Bar{\textbf{g}}^*_{il}}} 
\newcommand{\bhkl}
{{\Bar{\textbf{h}}_{kl}}} 
\newcommand{\bhilc}
{{\Bar{\textbf{h}}^*_{il}}} 
\newcommand{\bgilt}{{\Bar{\textbf{g}}^T_{il}}} 
\newcommand{\tgkl}{{\Tilde{\textbf{g}}_{kl}}} 
\newcommand{\tgil}{{\Tilde{\textbf{g}}_{il}}} 
\newcommand{\zekl}{{\xi_{kl}}} 
\newcommand{\bkl}{{\beta_{kl}}} 
\newcommand{\bil}{{\beta_{il}}} 
\newcommand{\bklsq}{{\beta^2_{kl}}} 
\newcommand{\bilsq}{{\beta^2_{il}}} 
\newcommand{\bklp}{{\beta_{kl'}}} 
\newcommand{\zkl}{{\textbf{z}_{kl}}} 
\newcommand{\sui}{{\sum\nolimits_{i \in \mathcal{P}_k}}} 
\newcommand{\suk}{{\sum\nolimits_{k\in\KE}}}
\newcommand{\suik}{{\sum\nolimits_{i\in\KE}}}
\newcommand{\sul}{{\sum\nolimits_{l\in\LA}}}
\newcommand{\slpL}{{\sum\nolimits_{l'\in\LA\setminus l}}}
\newcommand{\slp}{{\sum\nolimits_{l'\in\LA}}}
\newcommand{\xEl}{{\textbf{x}^E_{l}}} 
\newcommand{\rEk}{{r^E_{k}}} 
\newcommand{\IEk}{{{I}^E_{k}}} 
\newcommand{\EEk}{{{E}^E_{k}}} 
\newcommand{\ECk}{{{E}^C_{k}}} 
\newcommand{\dEk}{{\Delta {E}_k}} 
\newcommand{\ME}{{\mathbb{E}}} 
\newcommand{\MV}{{\mathrm{Var}}} 
\def\BibTeX{{\rm B\kern-.05em{\sc i\kern-.025em b}\kern-.08em
    T\kern-.1667em\lower.7ex\hbox{E}\kern-.125emX}}
\begin{document}

\pagenumbering{gobble}

\title{Energy Harvesting Characterization in Cell-Free Massive MIMO Using Markov Chains\\
}

\author{\IEEEauthorblockN{Muhammad Zeeshan Mumtaz, Mohammadali Mohammadi, Hien Quoc Ngo, and Michail Matthaiou}\\
\IEEEauthorblockA{{Centre for Wireless Innovation (CWI), Queen’s University Belfast, U.K.} \\
Email:\{mmumtaz01, m.mohammadi, hien.ngo, m.matthaiou\}@qub.ac.uk}
}

\maketitle

\begin{abstract}
This paper explores a discrete energy state transition model for energy harvesting (EH) in cell-free massive multiple-input multiple-output (CF-mMIMO) networks. Multiple-antenna access points (APs) provide wireless power and information to single-antenna UE equipment (UEs). The harvested energy at the UEs is used for both uplink (UL) training and data transmission. We investigate the energy transition probabilities based on the energy differential achieved in each coherence interval. A Markov chain-based stochastic process is introduced to characterize the evolving UE energy status. A detailed statistical model is developed for a non-linear EH circuit at the UEs, using the derived closed-form expressions for the mean and variance of the harvested energy. More specifically, simulation results confirm that the proposed Gamma distribution approximation can accurately capture the statistical behavior of the harvested energy. Furthermore, the energy state transitions are evaluated using the proposed Markov chain-based framework, while mathematical expressions for the self, positive and negative transition probabilities of the discrete energy states are also presented. Our numerical results depict that increasing the number of APs with a constant number of service antennas provides significant improvement in the positive energy state transition and reduces the negative transition probabilities of the overall network.
\let\thefootnote\relax\footnotetext{This work was
supported by the European Research Council
(ERC) under the European Union’s Horizon 2020 research
and innovation programme (grant agreement No. 101001331). The work of H. Q. Ngo was supported by the U.K. Research and Innovation Future Leaders Fellowships under Grant MR/X010635/1. M. Z. Mumtaz is also with the College of Aeronautical Engineering, National University of Sciences \& Technology (NUST), Pakistan, (email: zmumtaz@cae.nust.edu.pk).}
\end{abstract}


\vspace{-0.5em}
\section{Introduction}

CF-mMIMO emerges as a promising solution to the practical implementation of wireless power transfer (WPT). In CF-mMIMO, the distance between service provider APs and UEs is significantly decreased compared to conventional cellular architectures~\cite{Hien:cellfree}. This reduction in the AP-UE proximity mitigates the path loss attenuation effect in remote charging \cite{loku}. Moreover, continuous advancements in the massive MIMO hardware enable more precise electromagnetic beam alignment, along with side-lobe reduction, ensuring enhanced energy delivery at the UE's receive antennas.

The concept of EH in CF-mMIMO systems has been investigated in the recent literature~\cite{femenias,Demir,Wang:JIOT:2020, Mohammad:GC:2023}. In~\cite{femenias}, the UL information and DL power transfer in CF-mMIMO systems was considered, where a coupled UL/downlink (DL) optimization was performed through the UL signal-to-interference-noise ratio (SINR). Demir \textit{et al.} \cite{Demir} proposed a joint optimization of power control via the large-scale fading decoding vectors for maximizing the minimum spectral efficiency of all UEs. Wang \textit{et al.} \cite{Wang:JIOT:2020} conceived a completely wireless powered Internet of Things (IoT) system underpinned by a CF-mMIMO configuration, wherein the UEs rely on harvested energy during the UL data transmission. 
In~\cite{Mohammad:GC:2023}, a joint AP operation mode selection and power control design was developed. In this design, specific APs are assigned for energy transmission to energy UEs, while other APs are dedicated to information transmission towards information UEs.



Although the above mentioned research works~\cite{femenias,Demir,Wang:JIOT:2020,Zhang:IoT:2022, Mohammad:GC:2023} have shown the efficacy of CF-mMIMO in WPT applications, the device energy state and its transition over multiple coherence intervals, subject to the energy differential, has been overly neglected. These parameters are fundamentally important in determining the optimal EH strategy while utilizing wireless networks. In \cite{kusal}, the authors  considered the concept of Markov chain-based energy state transitions. However, they limited their scope of study to  the probability of fully charged state only. Moreover, the energy consumption has not been accounted for in existing Markov processes, which limits their applicability to only forward state transitions. Furthermore, harvest-store-use strategies have been presented  in \cite{fangchao,teng} for energy harvesting in wireless networks utilizing Markov process. However, these works have not rigorously characterized the statistics of harvested power, while focusing their analysis mainly on energy storage state transitions. Against this background, our research work focuses on the concept of a CF-mMIMO based EH mechanism using a Markov state transition process. The main contributions of our paper can be outlined as follows:

\begin{itemize}
    \item We propose a Markov chain-based state transition of discrete energy levels for the UE energy storage, which relies on EH from the APs. Moreover, we present the state transition probabilities in relation with the energy differential after power consumption during the UL training and data transmission phases.
    \item A new analytical solution is derived to obtain closed-form expressions for the statistical parameters (mean and variance) of non-linear EH, which are used to approximate the energy accumulation in a UE's battery with a Gamma distribution. Numerical results validate that this approximation is quite accurate and useful for further analysis of the transition probabilities.
    \item 
    Numerical results show that increasing the number of APs for a constant number of service antennas, improves significantly  the positive energy state transition probability. On the other hand, the probability of negative state transitions reduces with the same increase in the number of APs.
\end{itemize}

\textit{Notation:} We use bold upper case letters to denote matrices, and bold lower case letters to denote vectors. The superscripts $(\cdot)^T$ and $(\cdot)^*$  stand for the transpose and conjugate, respectively; $\ME\{\cdot\}$ and $\MV\{\cdot\}$ denote the statistical expectation and variance;
$\mathcal{N}_C(0,\sigma^2)$ denotes a circularly symmetric complex Gaussian random variable with variance $\sigma^2$. Moreover, $\delta(x,y)$ is the delta function that equals one if $x=y$ and equals zero otherwise; \(\Gamma(a)\) is the Gamma function, while \(\gamma (a, x )\) denotes the lower incomplete Gamma function~\cite[Eq. (8.350)]{Integral:Series:Ryzhik:1992}. 


\vspace{-1.2em}
\section{System Model}~\label{sec:sysmodel}
We consider a CF-mMIMO system under time division duplex operation, where \(L\) multiple-antenna APs provide both information and power services to \(K\) single-antenna UEs. For notational simplicity, we define the sets $\KE\triangleq \{1,\ldots,K\}$ and $\LA\triangleq \{1,\ldots,L\}$ as the collections of indices of the EUs and APs. Each AP is equipped with \(N \geq 1\) antennas, while each UE is equipped with one single antenna.  
We assume a quasi-static channel model, with each channel coherence interval spanning a duration of $\tau_c$ symbols. Each coherence block includes four phases: 1) UL training of duration $\tau_p$ symbols, 2) DL EH of duration $\tau_h$ symbols, 3) DL information transfer of duration $\tau_d$ symbols, and 4) UL information transfer of duration $\tau_u$ symbols.

Let \(\textbf{g}_{kl} \in \mathbb{C}^N\) denote the communication channel between the $k$-th UE and the $l$-th AP. 
We consider spatially uncorrelated Rician fading channels. Therefore, $\gkl$ can be expressed as 
\vspace{0.2em}
\begin{equation}~\label{eq:gkl}
    \gkl= \sqrt{\dfrac{\zekl}{\kkl+1}}
    \left(\sqrt{\kkl}\bhkl +\tgkl\right),
\end{equation}
where $\zekl$ denotes the large–scale fading coefficient, \(K_{kl}\) is the Rician \(K\)-factor, while
$\bhkl$ and $\tgkl$ correspond to the line-of-sight (LoS) and non-line-of-sight (NLoS) components, respectively. We assume that $\tgkl\sim\mathcal{N}_C(\boldsymbol{0}, \qI_N)$ and $\bhkl = [1,e^{j\pi\sin(\phi_{kl})},\ldots,e^{j(N-1)\pi\sin(\phi_{kl})} ]^T$. For  notational simplicity, we define $\beta_{kl}\! \triangleq\!{\dfrac{\zekl}{K_{kl}+1}}$, $\bklq\!\triangleq\! \bkl \kkl$, and
$\bgkl\triangleq \sqrt{\bklq} \bhkl$.  Therefore, the channel model in~\eqref{eq:gkl} can be rewritten as  
\vspace{0.2em}
\begin{equation}
    \gkl= \bgkl +\sqrt{\beta_{kl}}\tgkl.
\end{equation}

Similar to~\cite{Hien:Asilomar:2018}, we consider scenarios where the UEs are fixed or move slowly, but there are movements of objects around them. In these
scenarios, it is reasonable to assume that $\bgkl$ changes slowly with time, and is known a priori. In addition, $\beta_{kl}$ is also assumed to be known a priori.


\subsubsection{Uplink Training and Channel Estimation}
The first part of the coherence interval of length $\tau_p$ symbols will be used for the UL training phase to estimate the channels. During the training phase, all UEs simultaneously transmit known pilot sequences to the APs. 
Let \(\pmb{\varphi}_{k} \in \mathbb{C}^{\tau_p }\) represent  the pilot sequence transmitted from the $k$-th UE from the available set of pilot sequences, where \(\lVert\pmb{\varphi}_{k}\rVert^2= \tau_p\). We consider the practical case of $K>\tau_p$, which implies that different UEs may be sharing the same training sequence and, hence, \textit{pilot contamination} occurs~\cite{Hien:cellfree}. 
The received signal \(\qY_{p,l} \in \mathbb{C}^{N \times \tau_p}\) in the UL training phase at the $l$th AP is
$ \qY_{p,l}=\sum\nolimits_{k\in\KE} \sqrt{\mathcal{P}_p} \textbf{g}_{kl} \pmb{\varphi}^T_{k}+\qN_l,$
where \(\mathcal{P}_p\) is the pilot transmit power and \(\textbf{N}_l\) is the additive
white Gaussian noise (AWGN) matrix with independent, identically distributed \(\mathcal{N}_C(0,\sigma^2)\) entries. To obtain the estimate of $\gkl$, the projection of $\qY_{p,l}$ onto  $\pmb{\varphi}_k$ is first obtained as 
\vspace{0em}
\begin{equation}
  \zkl=\dfrac{\qY_{p,l}\pmb{\varphi}^*_{k}}{\sqrt{\tau_p}} = \sqrt{\tau_p\mathcal{P}_p}  \sui \gil+\qn_{kl},
\label{eq:equalized}
\end{equation}
where $\mathcal{P}_k\subset\{1,\ldots,K\}$ is the set of indices, including $k$, of UEs assigned with the same pilot as UE $k$ and $\qn_{kl}\triangleq \dfrac{\qN_l\pmb{\varphi}^*_{k}}{\sqrt{\tau_p}}\sim\mathcal{N}_C(\boldsymbol{0},\sigma^2\qI_N)$. Now, given $\zkl$, the minimum-mean-squared error (MMSE) estimate of $\gkl$ is $\hgkl=\bgkl+\thgkl$,
where $\thgkl\triangleq\ckl \big(\sqrt{\tau_p\mathcal{P}_p} \sui \sqrt{\bil} \tgil +\textbf{n}_{kl}\big)$ is the zero mean part of the  MMSE estimate with \(\ckl =\frac{\sqrt{\tau_p\mathcal{P}_p} \bkl}{ {\tau_p\mathcal{P}_p}\sui\bil +\sigma^2}\). The statistical parameters of the MMSE estimate are
\vspace{0.1em}
\begin{subequations}
   \begin{align}
    \ME\{\hgkl\}&=\bgkl,\\
    \MV\{ [\hgkl]_{t}\}&\triangleq \gamkl= \sqrt{\tauP} \bkl \ckl.
\end{align} 
\end{subequations}

We notice that when $i\in \mathcal{P}_k$,  the NLoS components of the MMSE estimates are linearly correlated as \(\thgil\!=\! \aik \thgkl\) where \(\aik\!\!\triangleq\!\! \bilsq/\bklsq\). Moreover, we can express the estimation error $\tekl$ as $\tekl\!\!=\!\!  \sqrt{\bkl} \tgkl\!- \thgkl$, with
$\tekl\!\sim\!\!\mathcal{N}_C(\boldsymbol{0},\!\upsilon_{kl}\qI_N)$ for $\upsilon_{kl}\!=\!\bkl\!-\gamkl$.

\vspace{-1.2em}
\section{Energy Harvesting and Markov Model}
In the DL EH phase, each AP forms precoding vectors using the estimated channels and transmits energy to the UEs. Let \(\qw_{kl} \in \mathbb{C}^{N}\)  denote the DL precoding vector for the EH phase. Maximum ratio transmission (MRT) has been shown to be an optimal beamformer for power transfer that maximizes the harvested energy when $N$ is large \cite{almradi2016performance}. Thus, the precoding vector at AP $l$ for UE $k$ can be designed as  $\qw_{kl}= \kmrt\hgkl$, where $\kmrt\!=\!\frac{1}{\sqrt{ \ME \left\{\lVert \hgkl \rVert^2 \right\}}}\!=\!\frac{1}{\sqrt{N(\bklq + \gamkl)}}$. Accordingly, the transmitted signal from the $l$th AP can be expressed as
\vspace{0em}
\begin{align}~\label{eq:xle}
    \xEl&=\suk\sqrt{\eta_{kl}}\qw^*_{kl}{e}_{k}, 
\end{align}
where ${e}_{k}$ is the zero-mean unit-variance energy signal for UE $k$.  We assume that independent energy symbols are used for different UEs. In~\eqref{eq:xle}, \(\eta_{kl}\) is the power control coefficient, chosen to satisfy the power constraint at each AP, i.e., $ \mathbb{E}\left\{\lVert \textbf{x}^E_l \rVert^2 \right\}\leq \mathcal{P}_d$,
where $\mathcal{P}_d$ is the maximum average transmit
power at AP $l$. 
The received signal at the $k$-th UE, during the EH phase, is given by
\begin{align}~\label{eq:rE}
    \rEk
    &=\sul \suik \kmrti \sqrt{\eta_{il}} \gklt \hgil^*{e}_{i}+n^E_k,
\end{align}
where \(n^E_k \sim \mathcal{N}_{\mathbb{C}}(0,\sigma^2)\) is the AWGN at the $k$-th UE. Since the noise floor is too low for EH, we neglect the effect of $n^E_k$ during the EH phase as in \cite{femenias,Demir,Wang:JIOT:2020,Boshkovska:CLET:2015}. Thus,  the received radio-frequency (RF) energy at EU $k$, denoted by $\IEk$, can be expressed as 
\begin{align}
\label{eq:harvested_power}
   \IEk
    &= \ME \bigg\{\Big\lvert \sul \suik \kmrti\sqrt{\eta_{il}}  \gklt \hgil^*  {e}_{i} \Big\rvert^2 \bigg\}
    \\
    &=  \!\sul\! \sum\nolimits_{l'\in \mathcal{L}}\! \suik \kmrti \kmrtip
    \sqrt{\eta_{il} \eta_{il'}} 
     \gklt \hgil^* \hgiltp \gklcp,\nonumber
    \end{align}
where the expectation is taken over $e_i$.

To characterize the harvested energy, 
a practical nonlinear EH model is considered \cite{Boshkovska:CLET:2015}. Therefore, the energy harvested within a single EH interval \(\tau_h\) at $k$-th UE is given by
\vspace{0.2em}
\begin{equation}
    \EEk= \tau_h \psi_k\left(\Lambda_k(\IEk) - \varphi_k\right),
\label{eq:harvested_energy}
\end{equation}
where  \(\psi_k= I^E_{k,max}/(1-\varphi_k)\)\ amd \(I^E_{k,max}\) is the maximum output DC power at UE $k$, $\varphi_k=1/{(1+e^{a_k b_k})}$ is a constant to guarantee a zero input/output response, while  $\Lambda\left(\IEk\right)=1/{(1+e^{-a_k(\IEk-b_k)})}$
is the traditional logistic function of \(\IEk\), where $a_k$ and $ b_k$ are circuit related parameters. We now present the statistics of $\IEk$, which will be used to develop the Markov chain model.  
\begin{Lemma}~\label{Lemma:EIK}
The statistics of the received RF energy, $\IEk$, can be obtained as
\begin{align}
    &\!\!\ME\big\{\IEk\big\}\!
    = \!\sul \slp \suik \!\kmrti\!\kmrtip\!\sqrt{\eta_{il}\eta_{il'}}
    \Xi_{ik,ll'},
    \label{eq:mean:final}
    \\
     &\!\! \MV\big\{\IEk\big\}
    \!=\! \sul \slp \suik \left(\kmrti \kmrtip\right)^2 {\eta_{il}\eta_{il'}}\nonumber\\
    & \quad\quad\quad\quad\,\,\times \big(\Upsilon_{ik,l}^{\mathsf{coh}}\delta_{l,l'} + \Upsilon_{ik,ll'}^{\mathsf{noncoh}}(1-\delta_{l,l'} )\big),
    \label{eq:var:final}
\end{align}
where
\vspace{-0.2em}
\begin{align}
    \Xi_{ik,ll'}&\!=\! \delta_{l,l'}N\big(N \bkilsq \oiklsq \!+\! \bklq \gamil \!+\!\bkl\nu_{il}\!+\!(1\!-\!\delta_{l,l'} )N^2\nonumber\\
    &\quad\times(\bkil \oikl\!+\! \aik \gamkl)(\bkilp \oiklp\!+\!\aik \gamklp)\nonumber\\
    &\quad+ \gamkl (\aiksq (N+1) \gamkl + 2 N  \aik \bkil \oikl -\gamil)\big)
\end{align}
with $\oikl \!=\! \bhkl^T \bhilc/N$, $\!\nu_{il}\!=\!\bilq\!+\!\gamil$ and $\bkil\!=\! \sqrt{\bklq \bilq}$. Moreover, $\Upsilon_{ik,l}^{\mathsf{coh}}$ and $\Upsilon_{ik,ll'}^{\mathsf{noncoh}}$ are given at~\eqref{eq:upsilon} at the top of next page. 
\begin{figure*}
  \begin{align}~\label{eq:upsilon}
        &\!\!\!\Upsilon_{ik,l}^{\mathsf{coh}} 
        \!=\!2 N^2 \bkilsq\! 
         \big[
         N \aiksq \bklq \oiklsq \gamkl\!\!+\!\!N \bkl \oiklsq \bilq\!\!+\!\! \aiksq \oiklsq \gamkl ( \bkl\!\!+\!\!N\gamkl\!)\!\!+\!\!\aiksq \bkl  \gamkl\! 
         \big]\!\! +\!\! N^2\!\big[\aiktt \bklqsq \gamklsq\!\!+\!\!\bklsq \bilqsq \big]\!\! +\! 2 \aiksq \!\gamkl N\!(N\!\!+\!\!1\!)\!  
           \nonumber\\
          & \hspace{-0.65em}
           \times\!\!\big[\aiksq  \bklq   \gamkl\!  \big((N\!\!+\!\!1\!)\gamkl\!\! +\! \beta_{kl} \!\big)\!\!+\! \bilq \!\big((N\!\!-\!\!1\!) \bkl\gamkl\!\! +\! \bklsq\!\! +\! 2 \gamklsq\big)\!\big]\!\!+\!\! N \aiktt \gamklsq\! \big[\! (\!N\!\!+\!\! 1)(\!N\!\!+\!\!2)\gamkl \big((\!N\!\!+\!\!3) \gamkl \! \!+\!\!4 \upsilon_{kl}\! \big)\!\! +\!\! \upsilon^2_{kl}(2N\!\!+\!\!1\!)\!\!-\!\!(\bkl\!\!+\!\!N\!\gamkl\!)^2 \big]\!,
                      \nonumber\\
        &\!\!\!\Upsilon_{ik,ll'}^{\mathsf{noncoh}}\!=\!
     N^2 \! \oiklsqp\bkilpsq \!\big[\aiksq\gamkl
     (\bkl\!\! +\!\! N\nu_{kl} )
      \!\!+\!\!
     N\bkl \bilq\! \big]\!\!+\!\!
    N^2\! \oiklsq \bkilsq\!\big[ \aiksq \gamklp \!(\bklp\!\!+\!\!N \nu_{kl'})\!\!+\!\!
     N\bklp \bilqp\! \big]\!\!+\!\!N^2\!\aiksq\! \big[\bkl  \bilq  \gamklp (\bklp\!\!+\!\!\bklqp\!\!
    \nonumber\\
    &\hspace{-0.65em}+\!N\!\gamklp\!)
    +\!\! \bklp \bilqp \gamkl(\beta_{kl}\!\!+\!\!\bklq\!\!+\!\!N\!\gamkl\!)\!\big]\!\!
    +\!\!
    N^2\!\bkl \bklp \bilq \bilqp\!\!+\!\!N^2\!\aiktt  \gamkl \gamklp\! \big[\!(\bklq \!\!+\!\! \bkl\!)(\bklqp\!\!+\!\!\bklp\!)\!\!
    +\!\!N\!(\gamkl(\bklp\!\!+\!\! \bklqp\!)\!\!+\!\!\gamklp(\bkl\!\!+\!\!\bklq\!)) \!
        \big].
  \end{align}  
\hrulefill
\vspace{-2mm}
\end{figure*}

\end{Lemma}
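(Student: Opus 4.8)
\emph{Proof proposal.} The plan is to treat $\IEk$ in \eqref{eq:harvested_power} as a random quantity in the channel fading and to obtain its first two moments by reducing every product of channels to monomials in the deterministic line-of-sight (LoS) vectors and the zero-mean complex Gaussian non-line-of-sight (NLoS) parts. A useful first step is to observe that the double sum over $l,l'$ in \eqref{eq:harvested_power}, together with the conjugate pairing, collapses into a squared magnitude, so that $\IEk=\suik|S_i|^2$ with $S_i\triangleq\sul\kmrti\sqrt{\eta_{il}}\gklt\hgilc$. I would then substitute $\gkl=\bgkl+\sqrt{\bkl}\tgkl$ and $\hgil=\bgil+\thgil$ into $S_i$, so that each summand expands into monomials contracting four vectors drawn from $\{\bgkl,\tgkl,\bgil,\thgil\}$ through inner products. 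The decisive algebraic inputs are: (i) the per-entry variances $\ME\{[\tgkl]_t[\tgkl]_t^*\}=1$ and $\ME\{[\thgil]_t[\thgil]_t^*\}=\gamil$; (ii) the correlation $\thgil=\aik\thgkl$ valid for $i\in\mathcal{P}_k$ together with independence of the estimates across distinct pilot groups and across APs $l\neq l'$; and (iii) the LoS contraction $\oikl=\bhkl^T\bhilc/N$, which isolates the only surviving deterministic inner product.

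For the mean I would take $\ME\{\cdot\}$ of the expanded scalar $\gklt\hgilc\hgiltp\gklcp$ term by term. Odd-order Gaussian monomials vanish, leaving only even products, which I evaluate with the second-order statistics above and the complex Gaussian (Isserlis) factorization wherever two NLoS factors meet. Grouping the survivors according to whether $l=l'$ (coherent combining, where $\tgkl$ and $\thgil$ share antenna indices and the $\mathcal{P}_k$ correlation is active) or $l\neq l'$ (non-coherent, where the cross terms reduce to products of LoS inner products and individual variances) reproduces the two pieces of $\Xi_{ik,ll'}$ in \eqref{eq:mean:final}, with the leading $N^2$ scaling arising from the coherent alignment of the beamformer with the channel. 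Because expectation is linear, no cross-UE terms arise here, so the mean is exact.

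For the variance I would use $\MV\{\IEk\}=\ME\{\IEk^2\}-(\ME\{\IEk\})^2$, with $\IEk^2=\sum_i\sum_j|S_i|^2|S_j|^2$. The step that collapses this to the single sum over $i$ in \eqref{eq:var:final} is the treatment of the inter-stream covariances $\mathrm{Cov}(|S_i|^2,|S_j|^2)$ for $i\neq j$: these are exactly zero when the corresponding estimates lie in distinct pilot groups, and are discarded as the governing approximation otherwise, leaving only the diagonal contributions $\ME\{|S_i|^4\}-(\ME\{|S_i|^2\})^2$. Each diagonal term is an eighth-order moment in the Gaussian entries, which I would expand via the complex Isserlis/Wick theorem, retaining only pairings consistent with the $\mathcal{P}_k$ correlation and the antenna-index summation carrying the LoS phases $\oikl$. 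Separating the resulting inner double sum into $l=l'$ and $l\neq l'$ then yields $\Upsilon^{\mathsf{coh}}_{ik,l}$ and $\Upsilon^{\mathsf{noncoh}}_{ik,ll'}$, respectively, as displayed in \eqref{eq:upsilon}.

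The main obstacle is the variance: the fourth-order structure of $\IEk$ makes $\ME\{\IEk^2\}$ an eighth-order moment computation in correlated complex Gaussians, and the large number of Wick pairings—each weighted by a product of $\gamkl$, $\gamil$, $\aik$, $\oikl$ and the LoS norms—must be organized so that spurious odd terms cancel and the net power of $N$ is correct. Verifying that the cross-UE covariances indeed drop out (or are negligible), and that the coherent and non-coherent groupings are exhaustive and non-overlapping, is where the bookkeeping is most error-prone; I would guard against mistakes by checking limiting regimes, namely pure LoS ($\gamkl\to0$) and pure NLoS ($\bgkl\to\boldsymbol{0}$, hence $\oikl\to0$), against the closed forms in \eqref{eq:upsilon}.
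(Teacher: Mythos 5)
Your reduction $\IEk=\suik|S_i|^2$ with $S_i=\sul\kmrti\sqrt{\eta_{il}}\,\gklt\hgilc$, and the mean computation built on it, are essentially the paper's own proof: expand $\gkl=\bgkl+\sqrt{\bkl}\tgkl$ and $\hgil=\bgil+\thgil$, invoke the pilot-contamination relation $\thgil=\aik\thgkl$, evaluate the Gaussian moments, and split into the coherent case $l=l'$ and the non-coherent case $l\neq l'$, where the expectation factorizes by independence across APs. That part is correct, and, as you say, exact by linearity.

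The variance is where your argument has a genuine flaw. You assert that $\mathrm{Cov}(|S_i|^2,|S_j|^2)=0$ \emph{exactly} whenever $i$ and $j$ lie in distinct pilot groups. This is false: every $S_i$ contains the same random factors $\gklt$, $l\in\LA$, so $|S_i|^2$ and $|S_j|^2$ are correlated regardless of pilot assignment. Concretely, for $i,j\notin\mathcal{P}_k$ in different groups, conditioning on $\{\gkl\}_{l\in\LA}$ makes $|S_i|^2$ and $|S_j|^2$ conditionally independent, so the law of total covariance gives $\mathrm{Cov}(|S_i|^2,|S_j|^2)=\mathrm{Cov}(f_i,f_j)$ with $f_i=\sum_{l,l'}\kmrti\kmrtip\sqrt{\eta_{il}\eta_{il'}}\,\gklt\,\ME\{\hgilc\hgiltp\}\,\gklcp$; these are nonconstant positive-semidefinite quadratic forms in the same Gaussian vectors, and their covariance is generically strictly positive, not zero. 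A second, related omission: even the diagonal term $\ME\{|S_i|^4\}-(\ME\{|S_i|^2\})^2$ is a quadruple sum over AP indices, and collapsing it to the double sum with the $\delta_{l,l'}$ split in \eqref{eq:var:final} requires additionally discarding covariances between distinct AP pairs that share an index (e.g., $(l,l')$ versus $(l,l'')$); the Wick expansion by itself does not produce that collapse. To be fair, the paper's appendix makes exactly the same neglect, silently, by writing $\MV\{\IEk\}$ as the sum of per-$(i,l,l')$ variances and omitting the per-term computation, so your final expressions would indeed coincide with \eqref{eq:var:final}. But the lemma's variance formula is then an approximation that drops all cross-covariances, and your proof should say so; presenting the cross-stream terms as exactly vanishing is the step that would fail under scrutiny.
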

\begin{proof}
    See Appendix~\ref{Proof:Lemma:EIK}.
\end{proof}

\vspace{-0.2cm}

By invoking~\eqref{eq:harvested_energy}, we calculate the expected value of the harvested energy $\EEk$, which can be expressed as
\vspace{0.2em}
\begin{equation}
    \ME\left\{\EEk\right\}= \tau_h \psi_k \left( \ME\left\{\Lambda\left[\IEk\right]\right\}-\varphi_k \right).
\label{eq:exp_eek}
\end{equation}

We notice that deriving $\ME\big\{\Lambda\left[\IEk\right]\big\}$ is extremely challenging. Given that \(0 \leq\IEk <b_k \), the logistic function is convex in its nature. Thus, we  use the following approximation
\vspace{0.0em}
\begin{align}
\label{eq:jansen_inequality}
         \ME\big\{\Lambda\left[\IEk\right]\big\}\approx \Lambda\left[\ME\left\{\IEk\right\}\right]=\dfrac{1}{1+e^{-a_k\left(\ME\left\{\IEk\right\}-b_k\right)}}.
\end{align}

Therefore, $\ME\left\{\EEk\right\}$ can be expressed as 
\begin{align}
\label{eq:jansen_inequality2}
        &\ME\left\{\EEk\right\}
         =\psi_k\left(\Lambda\left[\ME\left\{\IEk\right\}\right]-\varphi_k\right)+\acute{\epsilon}_k,
\end{align}
where \(\acute{\epsilon}_k\) is the approximation gap. Note that even a Taylor series approximation of (\ref{eq:exp_eek}) at \(\ME\{\IEk\}\) yields the same expression as in (\ref{eq:jansen_inequality}), since all the higher order terms become negligible.
Now, by using~\eqref{eq:harvested_energy}, the variance of the harvested energy is calculated as
\begin{equation}
        \MV\left\{\EEk\right\}\!=\!(\tau_h \psi_{k})^2 \left(\ME\left\{\Lambda\left[\IEk\right]^2\right\}\!-\!\left(\ME\left\{\Lambda\left[\IEk\right]\right\}\right)^2\right).
\label{eq:var_eek}
\end{equation}

We can use a Taylor series expansion of \(\Lambda\left[\IEk\right]\) at \(\ME\{\IEk\}\) for the function \(\ME\{\Lambda\left[\IEk\right]^2\}\) in \eqref{eq:var_eek}, while neglecting the terms which involve odd powers of \((\IEk-\ME\{\IEk\})\) since their expectation becomes zero. Here, we  also neglect the fourth moment of \(\left(\IEk-\ME\{\IEk\}\right)\) which is very small given that \(\IEk\) is close enough to \(\ME\{\IEk\}\) with high probability due to the law of large numbers. Hence, we have
\small
\begin{align}
        \ME\Big\{\!\Lambda\left[\IEk\right]^2\Big\}
        &\!\hspace{0em} \approx\! \MV\big\{\IEk\big\}\!\bigg(\bigg(\frac{\partial \Lambda\left[\ME\{\IEk\}\right]}{\partial \IEk }\bigg)^2\!+\!\Lambda\!\left[\ME\{\IEk\}\right]
        \nonumber\\
        &\hspace{1em}\times\frac{\partial^2 \Lambda\left[\ME\{\IEk\}\right]}{\partial (\IEk)^2 }\bigg)\!+ \!\left(\Lambda\left[\ME\{\IEk\}\right]\right)^2.
\label{eq:firstVarexp}
\end{align}
\normalsize
Substituting \eqref{eq:firstVarexp} into \eqref{eq:var_eek}, the variance of the harvested energy is given by
\vspace{0em}
\small
\begin{align}
    \!\MV\left\{\EEk\right\}\!=& 
    (\tau_h \psi_{k})^2 \MV\Big\{\IEk\Big\}
    \bigg( \frac{\partial^2 \Lambda\left[\ME\{\IEk\}\right]}{\partial (\IEk)^2 }\left[\ME\{\IEk\}\right]\nonumber\\
    &+\bigg(\frac{\partial \Lambda\left[\ME\{\IEk\}\right]}{\partial \IEk }\bigg)^2\bigg)+\acute{\chi}_k,
\label{eq:var_eek_final}
\end{align}
\normalsize
where \(\acute{\chi}_k\) is the approximation error, and
\vspace{0.5em}
\begin{align}
        \!\!\frac{\partial \Lambda\!\left[\ME\{\IEk\}\right]}{\partial \IEk }\! &=\! \dfrac{a_k e^{-a_k\left(\ME\left\{\IEk\right\}-b_k\right)}}{\big(1+e^{-a_k(\ME\left\{\IEk\right\}-b_k)}\big)^2}
        \\
        \!\!\frac{\partial^2 \Lambda\!\left[\ME\{\IEk\}\right]}{\partial (\IEk)^2 } &\!=\! \dfrac{a^2_k e^{-a_k\left(\ME\left\{\IEk\right\}-b_k\right)}\!\big(1\!-\!e^{-a_k(\ME\left\{\IEk\right\}-b_k)}\big)}{\big(1+e^{-a_k(\ME\left\{\IEk\right\}-b_k)}\big)^3}.
\end{align}

Now, we focus on the probability distribution of the harvested energy using the above derived mean and variance expressions in \eqref{eq:jansen_inequality} and \eqref{eq:var_eek_final}. Conventionally, the energy storage by using a single potential source follows  an exponential distribution ~\cite[Eq. (27.33)]{halliday}. In this case, multiple signals are acting as potential energy sources as shown in \eqref{eq:rE}. 
Therefore, the composite harvested energy $\EEk$ will approximately follow a Gamma distribution as an effective sum of independent exponential distributions\cite{kusal,tavana}.

Using the mean and variance of the harvested energy of the $k$-th UE, we can approximate the shape and scale parameters of the Gamma distribution as
\vspace{0.2em}
\begin{align}
    \begin{split}
        k_{k}=\dfrac{\left(\ME\left\{\EEk\right\}\right)^2}{\MV\left\{\EEk\right\}},\quad
        \theta_{k}=\dfrac{\MV\left\{\EEk\right\}}{\ME\left\{\EEk\right\}}.
    \end{split}
\end{align}
The cumulative distribution function (CDF) of the harvested energy can be approximately given by
\vspace{0em}
\begin{equation}
    F\left(\EEk;k,\theta\right) \approx \dfrac{1}{\Gamma(k_k)} \gamma \left(k_k, \dfrac{\EEk}{\theta_k} \right).
\label{eq:gam_cdf}
\end{equation}

\vspace{-0.5em}
\subsection{Markov Chain-Based Energy State Transitions}
In this section, we propose a Markov chain-based probabilistic state transition between multiple discrete levels of the UE energy storage. Generally speaking, a Markov process  provides an appropriate characterization of a random process where the future state depends only on the present status of a system. In this context, the process of EH in the proposed WPT-based CF-mMIMO system can be represented via a Markov chain, as the state of UE energy storage in the next coherence interval is only dependent on the current energy state, and the energy differential between the energy harvested by the APs and the energy consumed during the UL training and data transmission. An illustration of this process is shown in Fig.~\ref{fig:Markov_chain}, where each node represents a discrete state of the UE energy while the directional arrows show the likelihood of transition from one state to another state.  

\begin{figure}[t]
    \centering
    \vspace{-1em}
    \includegraphics[trim=0 0cm 0cm 0cm,clip,width=\columnwidth]{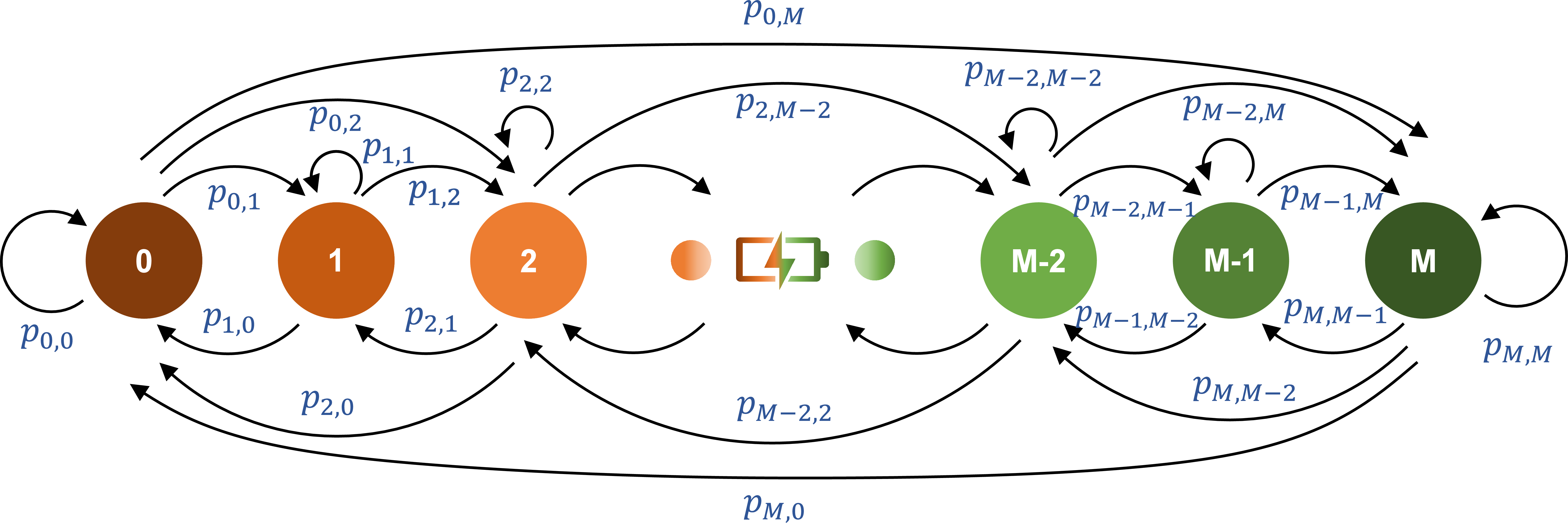}
     \vspace{-1.7em}
    \caption{Markov Chain-based energy state transitions.}
    \vspace{0.4em}
    \label{fig:Markov_chain}
\end{figure}
Each UE is assumed to have a finite energy storage with total capacity \(E_f\). The continuous energy entity is divided into \(M\) discrete energy levels with \(\delta E=E_f/M\) energy slots. Considering the independence of a Markov process from the past state history and its dependence only on the present state, it is equally likely that the UE energy level can be within any discrete state as we start observing the Markov process at any particular instance. So, the initial probability $\pi_0(m) \triangleq\Pr(X_0=m)$ of a particular energy state \(m\) at time \(t=0\) is \(1/M\). Moreover, we can represent the set of possible state transitions from a particular state \(i\) as \(\{P_{i,0},P_{i,1}, \hdots, P_{i,i-1}, P_{i,i}, P_{i,i+1}, \hdots, P_{i,M} \}\). Now, let us focus on this state \(i\) within the discrete nodes in Fig. \ref{fig:Markov_chain}, along with its adjacent states as shown in Fig. \ref{fig:State_transition_probabilities}.
If we consider that the UE energy level falls within this state at a certain time \(t\), the next possible states will depend on the energy differential \(\Delta E_k\) in the energy storage of the $k$-th UE within the single coherence interval \(\tau_c\). This factor \(\Delta E_k\) is further related to two entities: the total harvested energy, \EEkm, in the DL EH phase and the energy consumed both the UL pilot training  and UL information transfer phases, given by $\ECk\triangleq \tau_p P_p +\tau_u P_u$. Thus, \(\Delta E_k\) can be expressed as $\Delta E_k= \EEk -\ECk= \EEk-\left(\tau_p P_p +\tau_u P_u \right)$. To simplify the analysis, we consider that the expected energy differential is much smaller than the total state energy as \(\lvert\ME\{\Delta E_k\}\rvert\!\! <<\!\! E_f/M\), which will be validated in Section \ref{sec:results}. After this assumption, the set of possible transitions reduces to three states, i.e., \(\{P_{i,i-1}, P_{i,i}, P_{i,i+1}\}\). 



In Fig. \ref{fig:State_transition_probabilities}, \(P_{i,i+1}\) is the transition probability that a particular UE will have harvested enough energy in a particular \(\tau_c\) or that it will have positive energy differential \(\Delta E_k\) to transit from state \(i\) to a higher energy state \(i+1\), even after the power consumption in the UL pilot training and information transfer. Here, we assume that it is equally likely for all the energy states to be the initial energy state $i$ at the start of a Markov process, with probability $\pi^k_0(i)\!=\!\Pr(X^k_{0}\!=i )\!=\! 1/M \,\, \forall \,\, E^0_k \in \{E_1,E_2, \hdots, E_f\}$. At the $m$-{th} step of this Markov process, the total device energy will be the sum of initial battery energy $E^0_k$ and successive energy differentials $\Delta E^i_k$, given as, $E^n_k= E^0_k + \sum^n_{i=1} \Delta E^i_k$.

Considering the Markov property of independence across consecutive steps, the probability of the device energy at the $m${-th} interval being in state $j$, is only dependent on the last interval state $i$, irrespective of the energy state history in previous intervals. Therefore, the energy differential $\dEk$ during the $m$-{th} interval is the sole decision variable to determine the future state $X^K_m$, for a given previous state $X^K_{m-1}\!\!=\!i$. It can be inferred, by visualizing Fig. \ref{fig:State_transition_probabilities}, that if the current energy value in a particular state is within $\ME\{\dEk\}$ proximity to the state boundary, then transition to an adjacent energy state is highly probable. Hence, two statistical characteristics of $\dEk$ determine the conditional probability of state transition: $\ME\{\dEk\}$ and $\Pr(\dEk\!\!\leq\!0)$ or $\Pr(\dEk\!\!>\!0)$. It should be noted that $\ME\{\dEk\}$ is related to the magnitude of energy transition, whereas $\Pr(\dEk\!\!\leq\!0)$ and $\Pr(\dEk\!\!>\!0)$ specify the direction of state transition. 

\begin{figure}[t]
    \centering
    \vspace{-0.9em}
    \includegraphics[trim=0 0cm 0cm 0cm,clip,width=0.7\columnwidth]{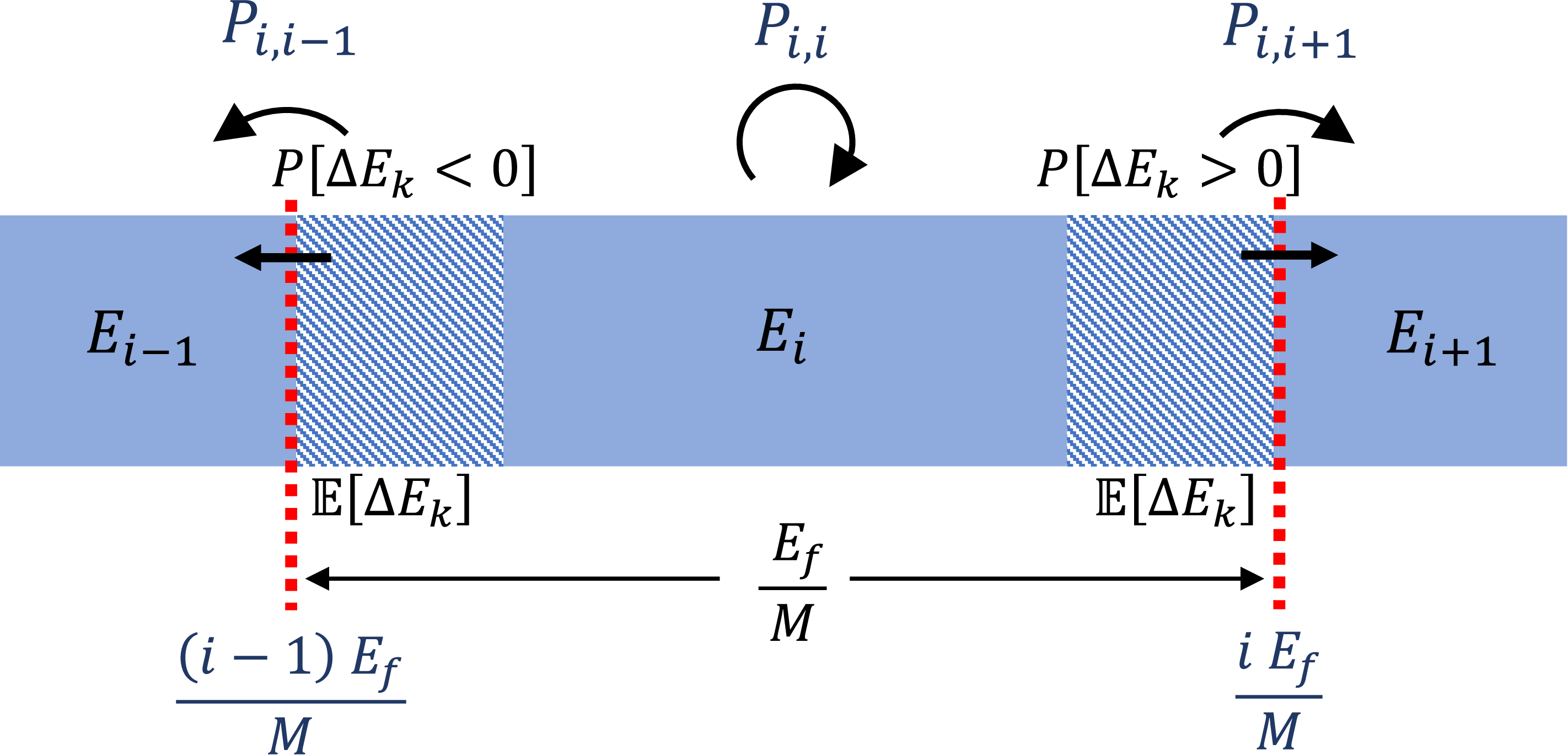}
    \vspace{-0.7em}
    \caption{Transition probability cases from a particular energy state.}
    \vspace{1em}
    \label{fig:State_transition_probabilities}
    \vspace{-0.5em}
\end{figure}
\vspace{-0.04em}

We first discuss the case of self transition within the same state over the next coherence interval. If the energy level in the $(m-1)$-{th} interval lies in between $\big((i-1)E_f/M + \dEk\big)$ and $\big((i+1)E_f/M - \dEk\big)$, then the UE energy storage will maintain state $i$. The probability of this event ($P_{i,i}$) is $(1- M \ME\{\Delta E_k\}/{E_f})$, which means that the energy level is not in the vicinity of the upper and lower boundaries of state $i$ with $\dEk$ margin on average. The probability of state transitions, $P_{i,i+1}$ and $P_{i,i-1}$, involve the condition of negation event ($\bar{P}_{i,i}$) of self transition. Given this condition, a negative state transition will happen whenever the energy consumption is more than the harvested energy (i.e., $\EEk\!\!\leq\!\!\ECk$). Using the CDF expression of the harvested energy in \eqref{eq:gam_cdf}, we can show that 
\vspace{0.3em}
\begin{align}
    \begin{split}
        \Pr\left(\Delta E_k \!\leq\! 0 \right)\!=\! \Pr\left(\EEk \leq \ECk \right)\!\approx\!\dfrac{1}{\Gamma(k_k)} \gamma \left(k_k, \dfrac{\ECk}{\theta_k} \right).\\
    \end{split}
\label{eq:neg_prob}
\end{align}

Similarly, the probability of positive state transition can be expressed as $\Pr\!\left(\Delta E_k \! >\! 0 \right)\!\!=\! \Pr\left(\EEk \!>\! \ECk \right)=1-\Pr\left(\Delta E_k \!\leq\! 0 \right)$.

Summarizing the above explanation, the conditional probability of transition from a particular state $i$ in the $(m-1)$-{th} interval to  state $j$ in the $m$-{th} interval for UE $k$ can be represented as

\begin{align}
\centering
    \begin{split}    
    \label{eq:transition_prob}
        P^k_{i,j}&= \Pr\left(X^k_{m}=j \Bigl\lvert X^k_{m-1}=i \right)\\
        &\hspace{-2em}=  \Pr\left(\!\dfrac{(j-1)E_f}{M}\le E^{m}_k\le\! \dfrac{j E_f}{M} \Bigl\lvert \dfrac{(i-1)E_f}{M}\le E^{m-1}_k\le \dfrac{i E_f}{M}\right)\\
        &\hspace{-2em}\approx
        \begin{cases}
            1- \dfrac{M \ME\{\Delta E_k\}}{E_f} & \text{for} \quad j=i \\
            \dfrac{M \ME\{\Delta E_k\}}{E_f \Gamma(k_k)}  \gamma \Big(k_k, \dfrac{\ECk}{\theta_k} \Big) & \text{for} \quad j=i+1 \\
            \dfrac{M \ME\{\Delta E_k\}}{E_f} \Bigg( 1-\dfrac{1}{\Gamma(k_k)} \gamma \left(k_k, \dfrac{\ECk}{\theta_k} \right)\Bigg) & \text{for} \quad j\!=\!i-1, 
         \end{cases}
    \end{split}
\end{align}
where $\ME\{\Delta E_k\}=\tau_h \psi_k \left( \ME\left\{\Lambda\left[\IEk\right]\right\}-\varphi_k \right)-\ECk$, while $\ME\left\{\Lambda\left[\IEk\right]\right\}$ is given in~\eqref{eq:jansen_inequality}.

Using the Markov property of state independence and definition of conditional transition probability, we can extend our analysis to gain valuable insights into an important performance indicator for EH systems: state transition probabilities after a period of \(n\) coherence intervals. This parameter can provide a useful understanding of the actual energy accumulation within a certain number of coherence intervals. Given that the initial state is \(q\), this quantity can be expressed as
\begin{equation}
    \pi^k_n(j)=\Pr(X^k_{n}=j )= \Bigg ( \prod\nolimits^n_{m=1} \sum\nolimits^M_{i=1} P^k_{i,j} \Bigg ) \pi^k_0(q).
\label{eq:trans_prob}
\end{equation}

\section{Numerical Results}
\label{sec:results}
In this section, we simulate the proposed CF-mMIMO model in Section \ref{sec:sysmodel} using Monte-Carlo simulations over 2000 coherence intervals for each configuration, where $L$ APs are providing energy along with data transmission to $K$ UEs, distributed over a square area of $100 \times 100$~m$^2$. Here, $K=20$ UEs have been considered in this area with battery storage capacity of $300$~mJ \cite{Wang:JIOT:2020}. This energy storage is divided into $M=2,000$ discrete states to observe the demonstration of Markov process. Moreover, we limit the total network power to 10W to make a fair comparison for varying network parameters. Similarly, the total number of services antenna, $LN$, is kept constant to $288$ elements \cite{Demir}.

The large scale fading parameter is defined as $\zekl=10^{-(PL_{kl}+\Psi_{kl})/10}$, which models the path loss $PL_{kl}$ and log-normal shadowing $\Psi_{kl}$. We adopt the three-slope model, which has been used to model path loss $PL_{kl}$ (dB) in~\cite{Hien:cellfree}, 
\begin{equation}
\hspace{-1em}PL_{kl}\! =\! 
\begin{cases} 
   -L\!-\!35 \,{\log}_{10}(d_{kl}) &  d_{kl}\!>\! d_1, \\
   -L\!-\!15 \,{\log}_{10}(d_{1})\!-\!20 \,{\log}_{10}(d_{kl}) &  d_0\!< \!d_{kl} \!< \!d_1, \\
   -L\!-\!15 \,{\log}_{10}(d_{1})\!-\!20 \,{\log}_{10}(d_{0}) &  d_{kl}\! <\! d_0,\nonumber \\
\end{cases}
\end{equation}
where $d_0=10$ m, $d_1=50$ m, and
$ L\triangleq \,46.3+33.9 \, \log_{10}(f)-13.82 \,\log_{10}(h_{AP})-(1.1 \log_{10} (f)-0.7) h_s +(1.56 \,\log_{10}(f)-0.8),$
with carrier frequency $f=1900$ MHz, AP height $h_{AP}= 15$ m, and UE height $h_s=1.65$ m, as given in \cite{Hien:cellfree, Wang:JIOT:2020}. Note that the system bandwidth has been fixed at $20$ MHz with coherence interval as $\tau_c=0.2$s, each with $200$ samples.  

\begin{figure}[t]
    \centering
\vspace{-2em}    
\includegraphics[trim=0 0cm 0cm 0cm,clip,width=0.75\columnwidth]{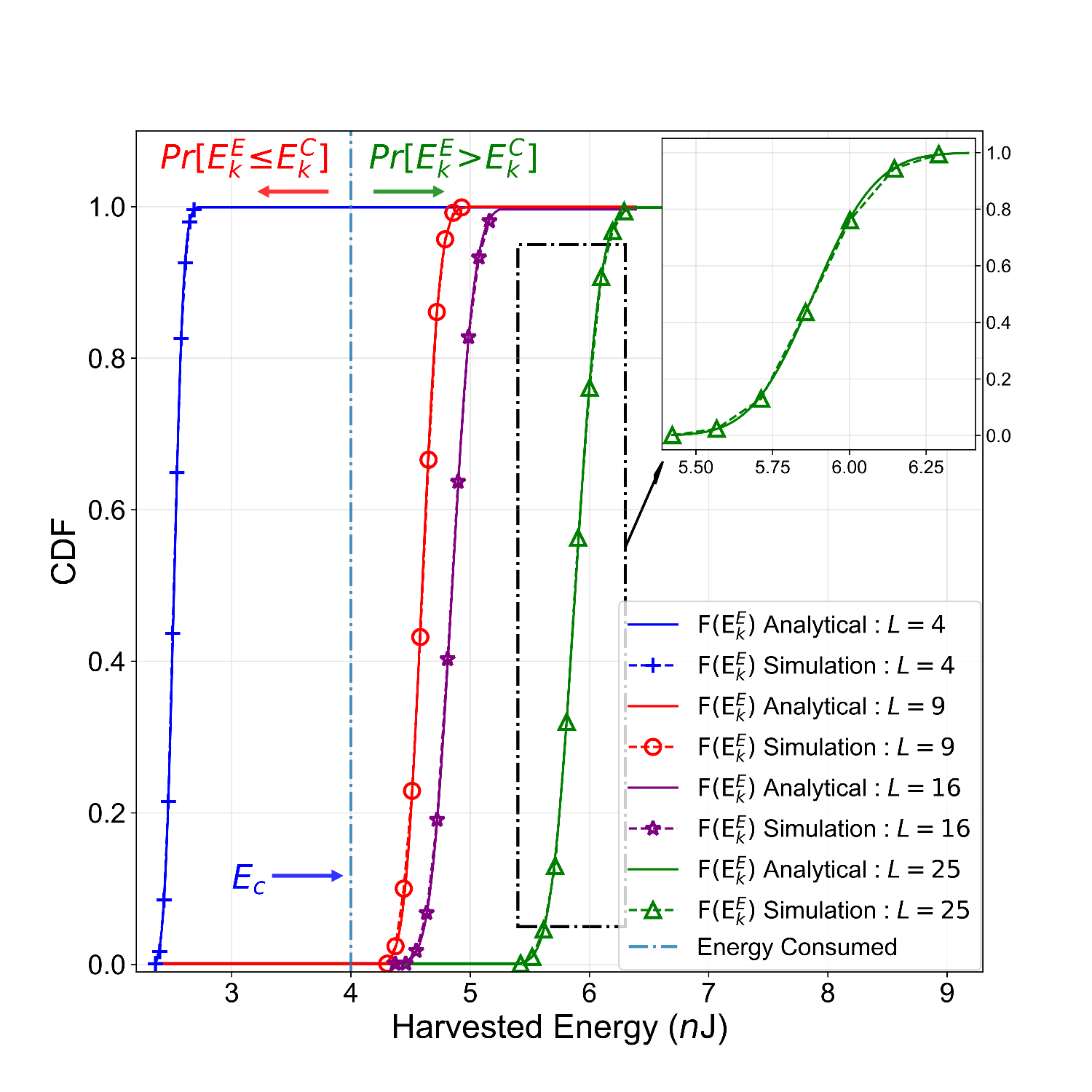}
    \vspace{-1.8em}
    \caption{CDF of the harvested energy per coherence interval for different number of APs with $LN=288$.}
    \vspace{0.2em}
    \label{fig:CDF}
\end{figure}

We first demonstrate the CDF of the harvested energy, i.e., $F\left(\EEk;k,\theta\right)$, of the user with median energy level among all UEs  for different AP configurations in Fig. \ref{fig:CDF}. The justification for this particular user choice is that we have to consider the probability distribution of the harvested energy of individual users for comparison with our analytical results. In this context, the CDF of median energy level user provides adequate representation for all users for a specific AP configuration. It can be noticed that the simulation results (dotted lines with markers) match the analytical results (solid lines) as given in \eqref{eq:gam_cdf} very closely, which validates our approximation that the probability distribution of the harvested energy can be modelled using a Gamma distribution. The analysis of the zoom plot for $L=25$ in Fig. \ref{fig:CDF} further supports our assumption, since it illustrates the tightness of the Gamma approximation based on the statistics of the harvested energy given in \eqref{eq:jansen_inequality2} and \eqref{eq:var_eek_final}. Similar results have been observed for all users within the network. Moreover, the graph in Fig. \ref{fig:CDF} has been divided into two regions around the decision boundary of energy consumption ($\ECk$) in a single coherence interval. The area which lies on the left of the decision boundary represents the region of negative energy differential ($\Pr(\EEk \leq \ECk)$), whereas the right side shows the region of positive energy differential ($\Pr(\EEk > \ECk)$). It is clear that increasing the number of APs in the CF-mMIMO system from $L=4$ to $L=25$, while keeping the number of service antennas constant, leads to a significant improvement in harvested energy. At $F\left(\EEk;k,\theta\right)=0.5$, the performance gains for cases $L=9, 16, 25$ in comparison to $L=4$ case, are 82.6\%, 92.7\% and 133.8\% respectively.  We can observe that for the case $L=4$, it is highly probable that the harvested energy will be less than the consumed energy, resulting in negative differentials for this instance. Conversely, positive energy differentials are likely for the cases $L=9, 16$ and $25$, affirming the superiority of the CF-mMIMO configuration over the co-located mMIMO in EH applications.

\begin{figure}[t]
\vspace{-2em}
    \centering
    \includegraphics[trim=0 0cm 0cm 0cm,clip,width=0.75\columnwidth]{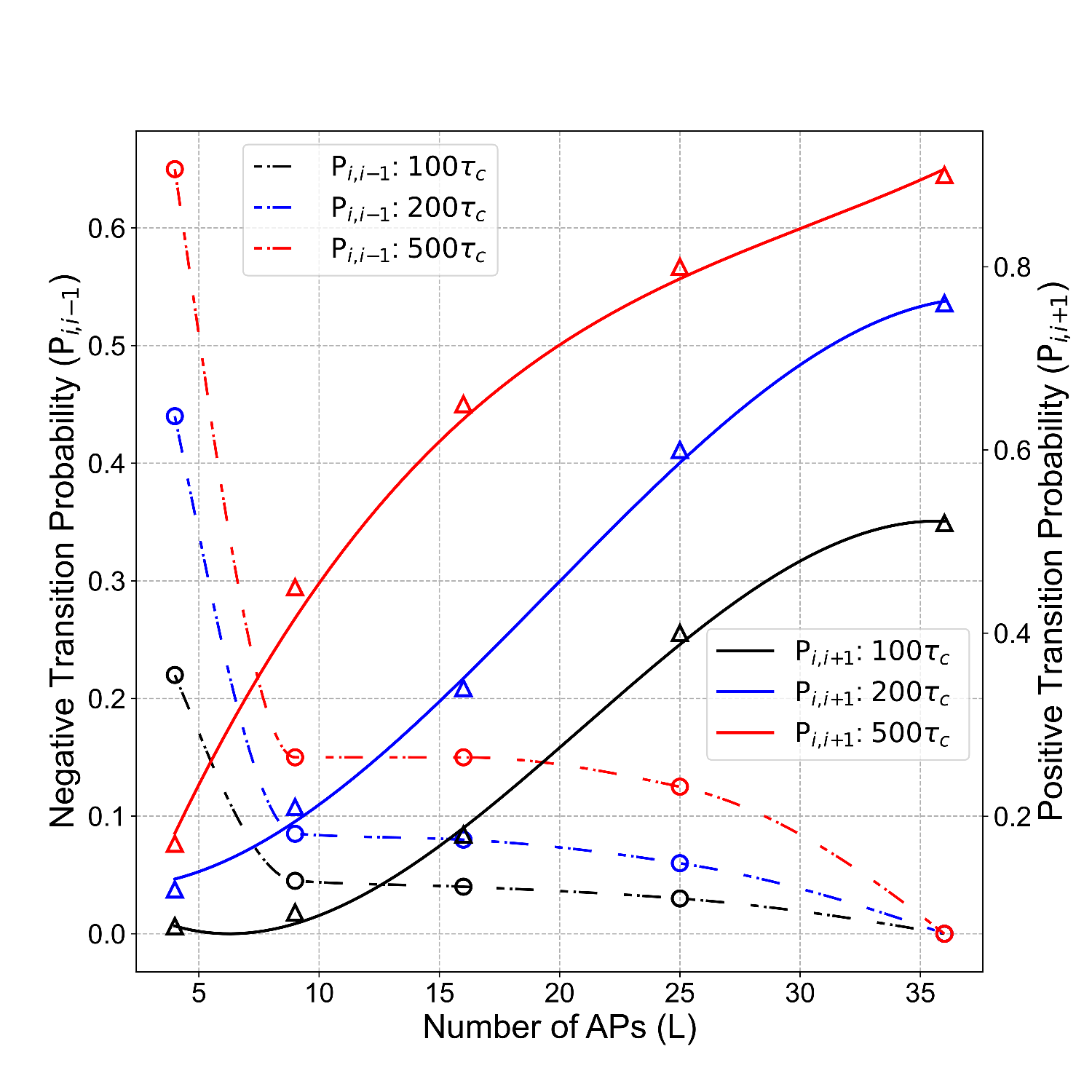}
    \vspace{-1.8em}
    \caption{Energy state transition probabilities after $n$ coherence intervals. Markers depict the simulation results and lines show the analytical results.}
    \vspace{0.2em}
    \label{fig:trans_prob}
\end{figure}

The transition probabilities of discrete user storage energy using the Markov chain model are tabulated in Table~\ref{tab:abbreviation}. Note that the harvested energy  within a single coherence interval ($\sim 10^{-6}-10^{-5}$$\mu$J) is insignificant in comparison to the energy of a discrete state ($0.15$~mJ). Therefore, the probabilities of higher order transitions ($P_{i,i\pm 2},P_{i,i\pm 3}, \hdots, P_{i,i\pm M}$) are almost zero. Figure~\ref{fig:trans_prob} shows the state transition of discrete user energy storage  over steps in order of coherence intervals using \eqref{eq:trans_prob}. 
We observe probability transitions of energy states over periods $n=100, 200 , 500\, \tau_c$,  It can be further noted that both the positive and negative transition probabilities are higher for larger $n$, as successive events of EH in a single interval have a cumulative effect, which is pronounced for larger observation intervals.  We can also observe a considerable increase in the positive transition probabilities and decrease in the negative transition probabilities with an increasing number of APs. Considering the discussion on Fig. \ref{fig:CDF}, it can be noticed that for the case $L=4$, the negative transition probabilities are more than the positive transition probabilities. This trend changes for the cases $L=9, 16$ and $25$, as the harvested energy is more than the consumed energy. It should be noted here that self-transition probability $P_{i,i}$$\!=\! 1\!-\!(P_{i,i-1}\!+\!P_{i,i+1})$ has not been shown here, which decreases with an increasing number of APs, due to an increase of the positive transition probabilities.  
We can attribute this improvement predominantly to more energy source diversification in CF-mMIMO, as smaller distances between APs and UEs results in a significant decrease in the path loss.


\begin{table}[t]
    \caption{Transition Probabilities of UE Energy States}
    \vspace{-1em}
    \centering
    \footnotesize
    \begin{tabular}{|p{1.3cm}|p{1cm}|p{1cm}|p{1cm}|}
        \hline
        No. of APs & $P_{i,i}$ & $P_{i,i+1}$ & $P_{i,i-1}$ \\
        \hline
        \textbf{4} & 0.99710 & 0.0007 & 0.00220\\
        \hline
        \textbf{9} & 0.99845 & 0.0011 &0.00045\\
        \hline
        \textbf{16} & 0.99800 & 0.0016 &0.00040\\
        \hline
        \textbf{25} & 0.99570 & 0.0040 &0.00030\\
        \hline
        \textbf{36} & 0.99480 & 0.0052 &0
        \\\hline
    \end{tabular}
    \label{tab:abbreviation}
\end{table} 






\section{Conclusion}
We formulated a new framework for energy harvesting in CF-mMIMO by capitalizing on the discrete energy state transitions using a Markov stochastic process. Using a Gamma distribution approximation, closed-form expressions for the statistical parameters of the harvested energy were derived alongside the state transition probabilities. We have further analyzed the impact of AP dispersion on the energy harvesting performance in terms of state transition probabilities. The future extension of this work would include power control optimization based on energy state transition probabilities to further enhance the energy harvesting efficiency of a realistic CF-mMIMO system. Henceforward, a detailed analysis of steady state probabilities can also provide useful insights into the importance of Markov processes for energy harvesting.


\appendices

\vspace{-0.1cm}
\section{Proof of Lemma~\ref{Lemma:EIK}}
\label{Proof:Lemma:EIK}
By taking the expectation of~\eqref{eq:harvested_power} over the small-scale fading, we get
\begin{align}
    \ME\big\{\IEk\big\}
    =&  \sul \slp\suik 
    \kmrti \kmrtip\sqrt{\eta_{il}\eta_{il'}}    
    \nonumber\\
    &\hspace{2em}\times
    \ME\big\{\gklt \hgilc \hgiltp \gklcp \big\}.
\label{eq:harvested_power_final}
\end{align}
Now, we consider two cases as follows:

\textbf{Case 1)} $l'= l$: The expectation in~\eqref{eq:harvested_power_final} can be obtained as
\begin{align}~\label{eq:mean:case1}
        &\ME \big\{\gklt \hgilc \hgilt \gklc\big\}
        =\bgklt\bgilc\bgilt\bgklc\!\!+\!\ME\{\bgklt\thgilc\thgilt\bgklc\}\nonumber\\
        &\hspace{3em}+\!\ME\{\thgklt\bgilc\bgilt\thgklc\}
        +\!\ME\{\thgklt\thgilc\thgilt\thgklc\}\!\!+\!\ME\{\teklt\bgilc\bgilt\teklc\}\nonumber\\
        &\hspace{3em}+\!\ME\{\teklt\thgilc\thgilt\teklc\}
        \!+\!\ME\{\bgklt\bgilc\thgilt\thgklc\}\!+\! \ME\{\thgklt\thgilc\bgilt\bgklc\}\nonumber\\
        &\hspace{3em}=N\big( N \bkilsq \oiklsq \!\!+\! \bklq \gamil+\! \bilq \gamkl\!\!+\! \aiksq (N+1) \gamklsq
        \nonumber\\
        &\hspace{3em} +\!(\bilq+\gamil) (\bkl\!-\!\gamkl)\!\!+\!  2 N \gamkl \aik \bkil \oikl \big),
\end{align}
where, we have used 
\begin{align}
        \ME\{\thgklt\thgilc\thgilt\thgklc\}\!\!=\! \aiksq \ME\{\thgklt\thgklc\thgklt\thgklc\}\!\!=\!\aiksq N(N\!+\!1) \gamklsq.
\end{align}

\textbf{Case 2)} $l'\neq l$: Following the independence of the channels, we get 
\begin{align}~\label{eq:mean:case2}
    &\ME\left\{\gklt \hgilc \hgiltp \gklcp \right\} = \ME\left\{ \gklt \hgilc\right\}\ME\big \{\hgiltp \gklcp\big\}\nonumber\\
    &\hspace{2em}= N^2(\bkil \oikl+ \aik \gamkl)(\bkilp \oiklp+\aik \gamklp).
\end{align}
To this end, by substituting~\eqref{eq:mean:case1} and~\eqref{eq:mean:case2} into~\eqref{eq:harvested_power_final}, the desired result in~\eqref{eq:mean:final} can be obtained.

Now, we focus on the calculation of the variance of the harvested power.  We can write
\begin{align}
    \MV\big\{\IEk\big\}
    &=  \sul \suik \left(\kmrti\right)^4 \eta^2_{il} \MV\big\{\gklt \hgil^* \hgilt \gklc \big\}\nonumber\\
    & + \sul \slpL \suik \left(\kmrti \kmrtip\right)^2 {\eta_{il}\eta_{il'}}\nonumber\\
    &\times\MV \big\{ \gklt \hgil^* \hgiltp \gklcp\big\}.
\label{eq:harvested_power_var_final}
\end{align}
The derivation of the two variance terms follows the same procedure as the expected value of $\IEk$, though it is omitted here due to space limitations.


\vspace{-.2cm}
\bibliographystyle{IEEEtran}
\bibliography{main}

\end{document}